\documentclass[letterpaper,10pt,journal]{IEEEtran} 

% ===============================================
% 1) Various packages
% ===============================================
\usepackage{amssymb,amsmath}
\usepackage{amsfonts}
\usepackage[english]{babel}
\usepackage[latin1]{inputenc}
\usepackage{fancyhdr}
\usepackage{yfonts}
\usepackage{mathrsfs}
\usepackage[dvips]{graphicx}
\usepackage{enumerate}
\usepackage{xspace}
\usepackage{cite}
\usepackage{psfrag}
\usepackage{url}
\usepackage[dvipsnames]{xcolor}
\usepackage{xcolor}
\usepackage{lipsum}
\usepackage{verbatim}
\usepackage{amsthm}
\usepackage{dsfont}
\usepackage{tikz}

% ===============================================
% 2) Theorems macros
% ===============================================
\newtheorem{theorem}{Theorem}
\newtheorem{lemma}{Lemma}
\newtheorem{definition}{Definition}
\newtheorem{proposition}{Proposition}
\newtheorem{corollary}{Corollary}

\newtheorem{remark}{Remark}

% ===============================================
% 3) Math symbols and macros
% ===============================================

% Text macros and operators
\newcommand{\prt}[1]{\left( #1 \right)}
\newcommand{\brk}[1]{\left[ #1 \right]}
\newcommand{\brc}[1]{\left\{ #1 \right\}}
\newcommand{\quotes}[1]{``#1"}
\newcommand{\abs}[1]{\left|#1\right|}

% Unique symbols
\def\R{\mathbb{R}}
\def\N{\mathbb{N}}
\def\T{\mathcal{T}}
\def\E{\mathbb{E}}
\def\DD{\mathcal{D}}
\def\NN{\mathcal{N}}

\newcommand{\nn}{\bar n}

% Descriptors
\newcommand{\SqM}{\bar{x}^2}
\newcommand{\MoS}{\overline{x^2}}
\newcommand{\Var}[1]{\text{Var}{#1}}

% Events
\newcommand{\Gos}[1]{\mathrm{Gos}_{#1}}
\newcommand{\Arr}[1]{\mathrm{Arr}_{#1}}
\newcommand{\Dep}[1]{\mathrm{Dep}_{#1}}
\newcommand{\Rep}[1]{\mathrm{Rep}_{#1}}
\newcommand{\Gn}{\Gos{n}}
\newcommand{\An}{\Arr{n}}
\newcommand{\Dn}{\Dep{n}}
\newcommand{\Rn}{\Rep{n}}

% Proposition 1
\newcommand{\source}[1]{s(#1)}
\newcommand{\arrival}[1]{a(#1)}

% ==========================================
% 4) Title & metadata
% ==========================================
\title{
Modelling Gossip Interactions in\\ Open Multi-Agent Systems
}

\author{Charles~Monnoyer~de~Galland, Samuel~Martin and Julien~M.~Hendrickx%
\thanks{
C. Monnoyer de Galland and J. M. Hendrickx are with the ICTEAM institute, UCLouvain, Louvain-la-Neuve, Belgium.
C. Monnoyer de Galland is a FRIA fellow (F.R.S.-FNRS). 
S.  Martin  is  with  Universit\'e de Lorraine  and  CNRS,  CRAN, UMR 7039, 2  Avenue  de  la  For\^et et de Haye, 54518  Vandoeuvre-l\`es-Nancy,  France.
This work is supported by the ``\textit{RevealFlight}'' ARC at UCLouvain, and by the \textit{Incentive Grant for Scientific Research (MIS)} \quotes{Learning from Pairwise Data} of the F.R.S.-FNRS.
Email adresses: \texttt{charles.monnoyer@uclouvain.be}, {\tt\small samuel.martin@univ-lorraine.fr}, \texttt{julien.hendrickx@uclouvain.be}.
}%
}

\date{September 2020}

% ==========================================
% 5) Document
% ==========================================
\begin{document}

\maketitle

%%%%%%%%%%%%%%%%%%%%%%%%%%%%%%%%%%%%%%%
%%%%%%%%%%%%%%%%%%%%%%%%%%%%%%%%%%%%%%%
\begin{abstract}
We consider \emph{open multi-agent systems}, which are systems subject to frequent arrivals and departures of agents while the studied process takes place.
We study the behavior of all-to-all pairwise gossip interactions in such open systems.
Arrivals and departures of agents imply that the composition and size of the system evolve with time, and in particular prevent convergence.
We describe the expected behavior of the system by showing that the evolution of scale-independent quantities can be characterized exactly by a fixed-size linear dynamical system.
We apply this approach to characterize the evolution of the two first moments (and thus also of the variance) for open systems of fixed and variable size.
Our approach is based on the continuous-time modelling of random asynchronous events impacting the systems (gossip steps, arrivals, departures, and replacements), and can be extended to other types of events.
\end{abstract}
\begin{IEEEkeywords}
Open multi-agent systems, Agents and autonomous systems, Cooperative control, Linear systems, Markov processes.
\end{IEEEkeywords}

%%%%%%%%%%%%%%%%%%%%%%%%%%%%%%%%%%%%%%%
%%%%%%%%%%%%%%%%%%%%%%%%%%%%%%%%%%%%%%%
\section{Introduction}
\label{sec:introduction}

Flexibility and scalability are among the most cited and desired features of multi-agent systems.
Real-life examples include flock of birds \cite{delporte2006birds}, ad-hoc networks of mobile-devices \cite{ApMAS:WSN}, the Internet, or vehicle coordination \cite{ApMas:Consensus_based_formation_control:CDC2019}, for which potential agent failures or new agent arrivals are expected to be handled by the system.
However, the classical models used to study multi-agent systems assume that their composition, as complex as it can be, remains unchanged over time, allowing for asymptotic results such as convergence or synchronization.

If arrivals and departures of agents are sufficiently rare as compared to the time-scale of the process that is studied by the system, this apparent contradiction may be justified, as the system is expected to be able to incorporate the effect of such event before the next one occurs.
Nevertheless, this may not be the case for certain large systems, as the arrival or departure probability of an agent and the characteristic length of a process both typically grow with the number of agents.
Living systems with birth processes or human societies are examples where growth is proportional to the size.
Extreme environments where communication within the system may be difficult or infrequent can also lead to slow convergence rates naturally comparable to agent failure rates, and thus also challenge this assumption.
Moreover, some systems are inherently open, such as \textit{e.g.}, a stretch of road that vehicles keep joining and leaving forever.

We consider here \emph{open multi-agent systems}, which are subject to permanent arrivals and departures of agents during the execution of the process that is considered, see \textit{e.g.}, Figure~\ref{fig:fixed_system_size_with_replacement_small}.
These arrivals and departures result in new challenges for the design and analysis of such systems.

\begin{figure}[!htbp]
\centering
    \vspace{-1cm}
    \includegraphics[width=0.35\textwidth,clip = true, trim=3cm 8cm 3cm 8.4cm,keepaspectratio]{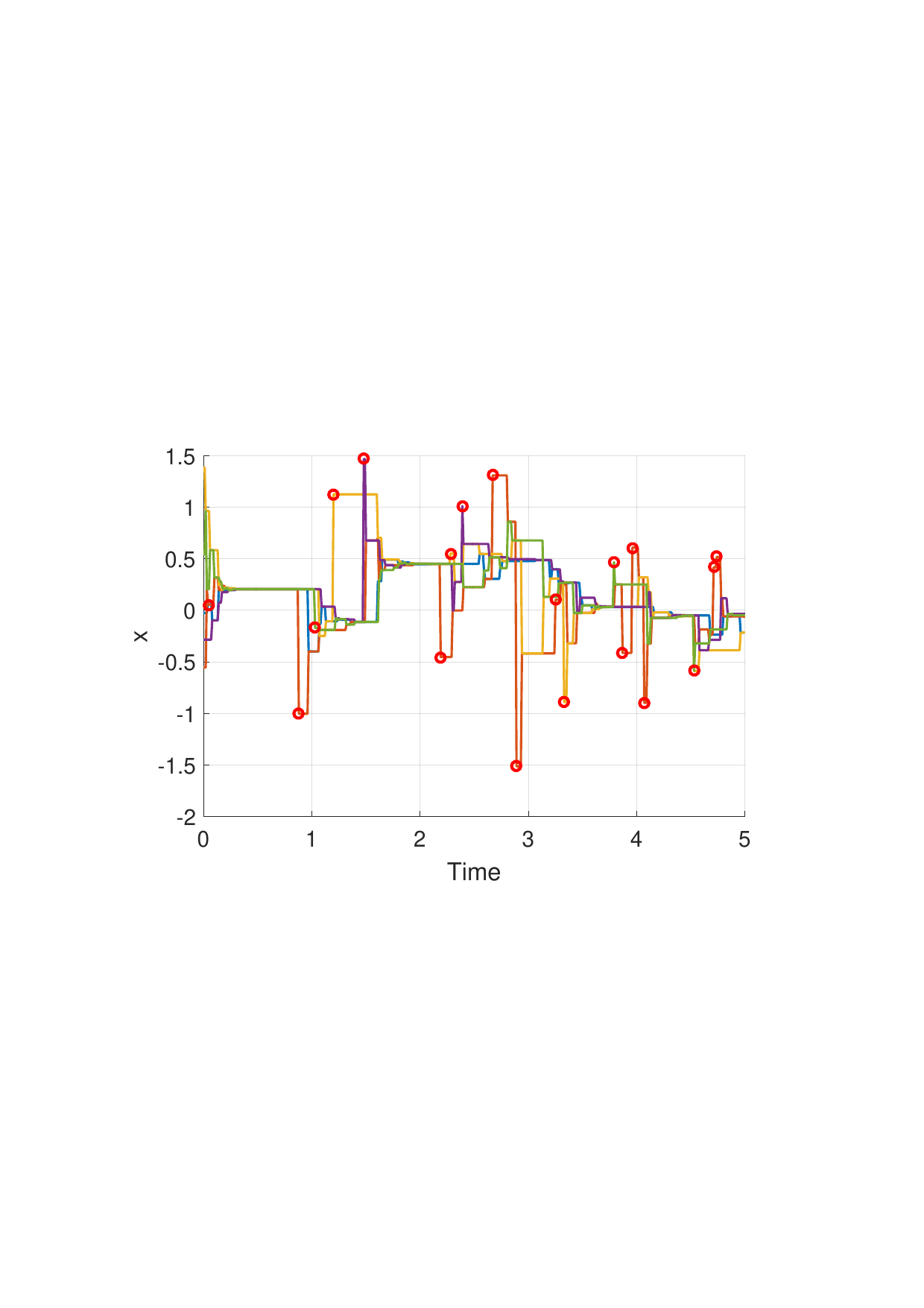}
    \vspace{-1cm}\\
    \caption{Example of dynamics of an open multi-agent system of $4$ agents subject to random pairwise average gossips \cite{boyd2006randomized} and agent replacements, the latter happening once every nine gossip steps on average (see Section~\ref{sec:App:FixedSize} for a precise description of the system). Each line corresponds to the value held by an agent, and each red circle highlights a replacement. The repeated replacements prevent convergence to consensus.}
    \label{fig:fixed_system_size_with_replacement_small}
\end{figure}

Firstly, every arrival and departure changes the system dimension, posing new challenges for its analysis.
Moreover, the system states suffer from repeated potentially important changes, preventing asymptotic convergence to some specific state (see \textit{e.g.},  Figure~\ref{fig:fixed_system_size_with_replacement_small}).
Instead, the general behavior of the system is expected to reach some form of steady state, which can be described by relevant quantities; as different quantities may behave very differently, their choice is not neutral.

Secondly, arrivals and departures significantly impact the design of decentralized algorithms.
On the one hand, departures often imply losses of information that could be needed depending on the nature of the problem.
On the other hand, the desired output of the algorithm can be defined by the values held by the agents presently in the system, and thus vary over time: it can then become necessary to eliminate outdated information.
Algorithms in open systems must thus be robust to arrivals and departures, and able to handle potentially variable objectives.
Such algorithms were already explored for instance for the MAX-consensus problem in \cite{OMAS:MAX} or the median consensus problem in \cite{OMAS:Median_consensus:CDC2019}, both subject to arrivals and departures.
Moreover, algorithms designed for open systems cannot be expected to provide \quotes{exact} results.
Fundamental limitations on the performance of averaging algorithms in open systems were for instance exposed in \cite{OMAS:CDC2019:FPL_intrAVG,OMAS:ARXIV_ITAC:2020}.
Hence, it may be preferable to maintain an approximate answer robust to perturbations rather than ensuring asymptotic convergence to an exact answer if the composition were to remain constant.

%%%%%%%%%%%%%%%%%%%%%%%%%%%%%%%%%%%%%%%
\subsection{Contribution and preliminary results}
\label{sec:introduction:contribution}

We focus on the analysis of open multi-agent systems subject to random pairwise gossip interactions \cite{boyd2006randomized}, with the goal of developing an approach applicable to general open systems.
We consider all-to-all (possible) pairwise interactions, focusing on systems where departures and arrivals take place at random times, see Section~\ref{sec:sys-description} for a complete definition.

We show that the evolution of open systems can be analyzed in terms of \quotes{scale-independent} quantities, which we call \emph{descriptors}.
We provide a general methodology to analyze their evolution in Section~\ref{sec:Generalization}, following from the observation that our system is a particular case of a general model involving Markov jump processes, whose properties can thus be used.
We apply this general approach to our system for two particular cases.
In Section~\ref{sec:App:FixedSize} we obtain an exact result for the evolution of the descriptors for fixed-size systems subject to only gossip interactions and replacements.
In Section~\ref{sec:App:VarSize} we focus on varying-size systems where arrivals and departures are decoupled; for simplicity and due to space limitation we directly work on the variance and obtain an upper bound on its evolution.

Preliminary versions of these results, presented in \cite{HendrickxMartin2016allerton,HendrickxMartin2017CDC}, focused on specific discrete-time systems subject to deterministic or probabilistic events.
By contrast, we now provide continuous-time tools that formalize our methodology and allow for more general analyses, relying on simpler and more concise proofs.

%%%%%%%%%%%%%%%%%%%%%%%%%%%%%%%%%%%%%%%
\subsection{Other works on open multi-agent systems}
\label{sec:introduction:stateOfArt}

Consensus has been considered in different works, \textit{e.g.}, in \cite{OMAS:Varmaetal:MajorityActionPreservation,OMAS:ITAC:OpenDynamicConsensus:Franseschelli-Frasca}.
Since usual notions of consensus and stability cannot be applied in open systems, those works define alternatives for these criteria, respectively in systems subject to activations and deactivations of nodes in the former, and in systems subject to arrivals and departures in the same sense as our work for the latter. 
Our approach differs however from those, as it aims at describing the \quotes{expected} behavior to which the system is expected to converge, by following relevant scale-independent quantities that capture its \quotes{averaged} behavior.

More generally, the possibility for agents to join and leave the system has been recognized in computer science, \textit{e.g.}, through specific architectures designed to deploy large-scale open multi-agent systems \cite{carrascosa2009service}, or with algorithms designed to maintain connectivity into P2P networks subject to arrivals and departures \cite{ApOMAS:OpenP2P}.
Similar frameworks were also considered \textit{e.g.} in \cite{ApOMAS:VTL} for dealing with autonomous cars joining cross-sections, or in the context of trust and reputation where the reliability of joining agents must be evaluated in \cite{huynh2006integrated}.

Self-stabilizing protocols in \cite{angluin2008self,delporte2006birds}, which could also be modelled as open systems, considered the possibility for agents to undergo temporary or permanent failures: the objective is then to ensure asymptotic stability on the desired answer if the system \quotes{closes}.
Empirical and simulation-based analyses have also been conducted in \cite{torok2013opinions,iniguez2014modeling} and in \cite{OMAS:sociophysics} respectively for opinion dynamics and social phenomena subject to arrivals and departures.

Finally, openness is getting attention in the context of decentralized optimization,
\textit{e.g.}, in \cite{DO:online-varyingFunctions} where varying objective functions are considered, and in \cite{OpenDo:OpenDGDStability} where the stability of the decentralized gradient descent in open system is analyzed.

%%%%%%%%%%%%%%%%%%%%%%%%%%%%%%%%%%%%%%%
%%%%%%%%%%%%%%%%%%%%%%%%%%%%%%%%%%%%%%%
\section{System Description and preliminaries}
\label{sec:sys-description}
We consider a multi-agent system whose composition evolves with time.
We use integers to label agents, and denote by $\NN(t)\subset\N$ the set of agents present in the system at time $t$ and by $n(t)= |\NN(t)|$ the number of agents.
Observe that the dimension of the system can change with time.
Each agent $i$ holds a value $x_i(t)\in\R$; we assume that the initial value held by the agents at $t=0$ is a bounded value randomly and independently drawn from some distribution $\DD$ with mean $0$ and variance $\sigma^2$.
Our results can immediately be adapted to distributions $\DD$ with arbitrary constant mean $\mu$.
Finally, we define the vector $x(t)\in\R^{n(t)}$ containing the values $x_i(t)$ of all the agents $i\in\NN(t)$ present in the system at that time.

%%%%%%%%%%%%%%%%%%%%%%%%%%%%%%%%%%%%%%%%%%%%%%%%%%%%%%%%%%%%%
\subsection{Evolution of the system}
\label{sec:sys-description:evolution}
The evolution of the system is characterized by four types of instantaneous modifications of both $n(t)$ and $x(t)$: \textit{gossip interactions}, \emph{arrivals}, \emph{departures} and \emph{replacements}.
Those modifications, whose effects are described below, are triggered by independent Poisson clocks with size-dependent rates $\lambda_{\Gn}$, $\lambda_{\An}$, $\lambda_{\Dn}$ and $\lambda_{\Rn}$ respectively.
In the sequel, we use $y$ and $y^+$ to denote the value of the time-varying quantity $y$ before and after its modification at the activation of a clock.

\begin{enumerate}[(a)]
    \item A \emph{gossip interaction} among $n$ agents, denoted $\Gn$, consists in selecting uniformly randomly and independently $i,j\in\NN(t)$ (with possibly $i=j$) to update their values $x_i$ and $x_j$ by performing a pairwise average $x_i^+ = x_j^+ = \frac{x_i+x_j}{2}$.
    Hence there holds $n^+=n$.
    
    \item An \emph{arrival} among $n$ agents, denoted $\An$, consists in one \quotes{new} agent $i\notin\NN(s)$, $\forall s\leq t$ joining the system, so that $\NN^+ = \NN\cup\brc{i}$, and thus $n^+ = n+1$.
    The initial value of the arriving agent is then independently and randomly drawn from the distribution $\DD$ used to initialize the system.
    
    \item A \emph{departure} among $n>0$ agents, denoted $\Dn$, consists in selecting an agent $i\in\NN(t)$ to leave the system, so that $\NN^+=\NN\setminus\brc{i}$ and $n^+=n-1$.
    
    \item A \emph{replacement} among $n>0$ agents, denoted $\Rn$, consists in the simultaneous occurrence of both a departure and an arrival. 
    There thus holds $n^+=n$.
\end{enumerate}
Observe that the way we model arrivals and departures implies that agents leaving the system never join it again (unlike systems subject to temporary disconnections of agents, see \textit{e.g.}, \cite{OMAS:ImpactOfNoise_RandomConsensusAlgo:vizuete2021}). 
Moreover, we did not specify the way we select the leaving agent at a departure.
In this work, we consider \emph{random} departures, where the leaving agent is uniformly randomly chosen, but other definitions might be considered, \emph{e.g.}, \emph{adversarial} departures where no assumption can be made on that choice.
We will see later that other or additional clocks, \textit{i.e.}, other rules defining the evolution of the system, could be considered.

%%%%%%%%%%%%%%%%%%%%%%%%%%%%%%%%%%%%%%%%%%%%%%%%%%%%%%%%%%%%
\subsection{Scale-independent quantities}
\label{sec:sys-description:scale-independent-quantities}
Because the system size may change with time, it is a challenge to follow $x(t)$ in our setting.
Instead, we show that we can study the evolution of scale-independent quantities, \textit{i.e.}, whose values do not scale with the system size, with a finite-dimensional affine dynamical system.
In particular, we consider the empirical squared mean $\SqM = \prt{\frac1n\sum_{i\in\NN}x_i}^2$ and mean of squares $\MoS = \frac1n\sum_{i\in\NN}x_i^2$, which we call \emph{descriptors} of the system, where we omit references to time to lighten the notation.
These quantities also allow tracking the variance $\Var(x)= \frac{1}{n} \sum_{i \in \NN} (x_i - \bar{x})^2 = \MoS - \SqM$, which quantifies the disagreement between the agents.
In closed systems, it is known that gossip interactions guarantee consensus (\textit{i.e.}, $\lim_{t \rightarrow \infty} \,  \max_{(i,j) \in \NN(t)^2} |x_i(t) - x_j(t)| = 0$), and thus $\lim_{t\to\infty}\Var(x(t))=0$, see \textit{e.g.}, \cite{FagnaniZampieri2007,boyd2006randomized}.
This cannot be achieved in open system as the value of any joining agent will with high probability be different from the value of the agents already present in the system.
Hence, by analysing the evolution of the descriptors, we study the disagreement between the agents (\textit{i.e.}, how \quotes{far} the system is from consensus).
In particular, we will be interested in the evolution of $\E\Var(x)=\E\MoS-\E\SqM$.
Observe that one could have monitored $\E\bar x$ as well, which evolves following an independent one-dimensional linear system (see \textit{e.g.}, \cite{HendrickxMartin2016allerton}).
However, we omit this part of the study due to space limitation.

Our first lemma, proved in Appendix~\ref{sec:appendix:bound_squaredmean}, bounds the descriptors in expectation, and will be used to derive our results.
\begin{lemma}
\label{lem:sys-description:bound_on_expected_quantities}
    In the setting described in Section~\ref{sec:sys-description}, and for any fixed time $t\geq0$, there holds
    \begin{align}
        \label{eq:lem:sys-description:bound_on_expected_quantities}
        &\E \prt{\SqM|n(t) = j} \leq \frac{1}{j}\sigma^2;&
        &\E \prt{\MoS|n(t)=j} \leq \sigma^2,
    \end{align}
    where we remind $n(t)$ is the system size at time $t$.
\end{lemma}
Observe that at the beginning of the process, there holds $\E(\SqM(0)|n(0)=j) = \frac{\sigma^2}{j}$ and $\E \prt{\MoS(0)|n(0)=j} = \sigma^2$.
Hence, the above lemma states that the expected values of the descriptors cannot exceed their initial expected value.

%%%%%%%%%%%%%%%%%%%%%%%%%%%%%%%%%%%%%%%%%%%%%%%%%%%%%%%%%%%%
\subsection{Evolution in expectation of the descriptors}
\label{sec:sys-description:effect-of-events}
Let us define the vector $X = \begin{pmatrix}\SqM&\MoS\end{pmatrix}^\top$ containing the descriptors, so that $\Var(x) = \begin{pmatrix}-1&1\end{pmatrix}X = \MoS-\SqM$.
We now show that the evolution of the descriptors in expectation upon activation of the different clocks defining the evolution of the system is governed by a 2-dimensional affine system, from which we also derive the evolution of $\E\Var(x)$.
We use the same notation as in Section~\ref{sec:sys-description:evolution} in the following lemmas, whose proofs are in Appendix~\ref{sec:appendix:EffectOfEvents}.

\begin{lemma}[Gossip step]
\label{lem:effect_operations:gossip}
    When $\Gn$ activates there holds
    \begin{align}
        \label{eq:lem:effect_operations:gossip:X}
        &\E\prt{X^+|X,\Gn} = A_{\Gn}X = \begin{pmatrix} 1 &  0 \\ \frac{1}{n} & 1 - \frac{1}{n} \end{pmatrix}X.
    \end{align}
    Hence, $\E\prt{\Var(x^+)|\Var(x),\Gn} = \left(1- \frac{1}{n}\right) \Var(x)$.
\end{lemma}

\begin{lemma}[Arrival]
\label{lem:effect_operations:arrival}
    When $\An$ activates there holds
    \begin{align}
        \label{eq:lem:effect_operations:arrival:X}
        \E\prt{X^+|X,\An} 
        &= A_{\An}X + b_{\An}\\
        \label{eq:lem:effect_operations:arrival:A_A&b_A}
        &= \tfrac{n}{n+1}\begin{pmatrix} \frac{n}{n+1} &  0 \\ 0 & 1 \end{pmatrix}X + \begin{pmatrix}\frac{1}{(n+1)^2} \\ \frac{1}{n+1} \end{pmatrix}\sigma^2.
    \end{align}
    Hence, $\E\prt{\Var(x^+)|\Var(x),\An} \leq \frac{n}{n+1}\prt{\Var(x) + \frac{\sigma^2}{n}}$ holds under the conditions of Lemma~\ref{lem:sys-description:bound_on_expected_quantities}.
\end{lemma}

\begin{lemma}[Departure]
\label{lem:effect_operations:departure}
    When $\Dn$ activates there holds
    \begin{align}
        \label{eq:lem:effect_operations:departure:X}
        &\E\prt{X^+|X,\Dn} 
        = A_{\Dn}X
        = \begin{pmatrix} 1-\frac{1}{(n-1)^2} &  \frac{1}{(n-1)^2} \\ 0 & 1 \end{pmatrix}X.
    \end{align}
    Hence, $\E\prt{\Var(x^+)|\Var(x),\Dn} = \prt{1-\frac{1}{(n-1)^2}}\Var(x)$.
\end{lemma}

We remind that Lemma~\ref{lem:effect_operations:departure} holds for random departures, where the leaving agent is randomly uniformly chosen among those in the system before the departure.
We now consider the random replacement, which consists of a random departure immediately followed by an arrival. 
The next result follows from a combination of Lemmas \ref{lem:effect_operations:departure} and \ref{lem:effect_operations:arrival},
the latter applied to a system of size $n-1$ joined by a $n^{th}$ agent.

\begin{lemma}[Replacement]
\label{lem:effect_operations:replacement}
    When $\Rn$ activates there holds
    \begin{align}
        \label{eq:lem:effect_operations:replacement:X}
        \E\prt{X^+|X,\Rn} 
        &= A_{\Rn}X + b_{\Rn}\\
        &= \begin{pmatrix} \frac{n-2}{n} & \frac{1}{n^2} \\ 0 & \frac{n-1}{n} \end{pmatrix}X + \begin{pmatrix}\frac{1}{n^2} \\ \frac{1}{n} \end{pmatrix}\sigma^2.
    \end{align}
    Hence, under the conditions of Lemma~\ref{lem:sys-description:bound_on_expected_quantities} there holds
    \begin{small}
    \begin{align*}
        \label{eq:lem:effect_operations:replacement:Var}
        \E\prt{\Var(x^+)|\Var(x),\Rn}
        \leq \frac{n^2-n-1}{n^2}\Var(x) + \frac{n^2-1}{n^3}\sigma^2.
    \end{align*}
    \end{small}
\end{lemma}

%%%%%%%%%%%%%%%%%%%%%%%%%%%%%%%%%%%%%%%
%%%%%%%%%%%%%%%%%%%%%%%%%%%%%%%%%%%%%%%
\section{Evolution of the descriptors as an ODE}
\label{sec:Generalization}
\subsection{Generalization of the system description}
\label{sec:Generalization:description}
Our system is a particular case of the following more general model: Let $S(t) = (n(t),x(t))$ be the state of the system, with $n(t)\in\N$ the system size at time $t$, and $x(t)\in\R^{n(t)}$ the time-varying vector containing the values of all the agents present in the system at that time.
The state $S(t)$ follows a Markov jump process, \textit{i.e.}, a continuous-time Markov process evolving with instantaneous changes at random times, called \quotes{\emph{jumps}}.
For each value of $n(t)$ we associate a finite set of independent Poisson clocks.
The activation of a clock $\epsilon$, which we call \quotes{event of $\epsilon$}, happens with rate $\lambda_\epsilon$ and triggers a jump in the process that modifies $n(t)$ and $x(t)$.
The modification of $n(t)$ is deterministic, and we can therefore define $\source\epsilon\in\N$ and $\arrival\epsilon\in\N$ such that an event of $\epsilon$ changes $n(t)$ from $\source\epsilon$ to $\arrival\epsilon$.
On the other hand, $x(t)$ is randomly modified according to a distribution that depends on the values of both $n(t)$ and $x(t)$.
We use $\Xi$ to denote the set of all the clocks in the system (\textit{i.e.}, for all values of $n$).

In our system described in Section~\ref{sec:sys-description}, the clocks correspond to the gossip interactions, arrivals, departures and replacements, and we have $\Xi = \bigcup_{n\in\N}\brc{\Gn,\An,\Dn,\Rn}$.
An event of $\An$ for instance happens with rate $\lambda_{\An}$ if the system size is $\source{\An}=n$, resulting in a jump of the process that changes the system size into $\arrival{\An}=n+1$ and modifies $x(t)$ by adding a new agent whose initial value is randomly drawn from the distribution $\DD$.

%%%%%%%%%%%%%%%%%%%%%%%%%%%%%%%%%%%%%%%
\subsection{General descriptors evolution}
\label{sec:Generalization:Kolmogorov}
The notion of descriptors can also be more generally defined as functions of the system state $f(S(t))$, such as \textit{e.g.}, $\SqM$ and $\MoS$ that can be formulated as functions of both $n(t)$ and $x(t)$.
Hence, our goal is to analyze the evolution in expectation of a given bounded function $f:E\to\R^d$, where $E$ is the state space over which $S(t)$ is defined. 
For that purpose, we provide variations of standard results on Markov processes in the following propositions, 
that are connected with Kolmogorov equations, and that are proved in Appendix~\ref{sec:appendix:tools:DescriptorsEvolutionProof:Kolmogorov}.
These results will be the basis for establishing the expected evolution of our descriptors conditioned by the system size.
\begin{proposition}
\label{prop:tools:DescriptorsEvolution:ode:Kolmogorov}
    Let $(E,\mathcal P(E))$ be a measurable space, and $S(t) = (n(t), x(t))$ a Markov jump process as defined in Section~\ref{sec:Generalization:description} with state-space $E$.
    Let $f:E\to\R^d$ for some $d\in\N$  be a measurable bounded function.
    Moreover, for $j\in\N$, let $F_j(t) = \E\brk{f(S(t))\ |\ n(t)=j}$ and $\pi_j(t)=P[n(t)=j]$.
    If for all $\epsilon\in\Xi$ there exist $A_\epsilon\in\R^{d\times d}$ and $b_\epsilon\in\R^d$ such that
    \begin{equation}
        \label{eq:prop:tools:DescriptorsEvolution:ode:Kolmogorov:AffineSys}
        \E\brk{f(S(t^+))\ |\ f(S(t)),\epsilon} = A_\epsilon f(S(t))+b_\epsilon,
    \end{equation}
    where $S(t^+)$ denotes the state of $S(t)$ after an event of $\epsilon$ at time $t$, then there holds
    \begin{align}
        \label{eq:prop:tools:DescriptorsEvolution:ode:Kolmogorov}
        \frac{d}{dt}(F_j(t)\pi_j(t))
        =  &\sum_{\epsilon:\arrival{\epsilon}=j}\lambda_\epsilon\pi_{\source\epsilon}(t)\prt{A_\epsilon F_{\source\epsilon}(t)+b_\epsilon}\nonumber\\
        &- \sum_{\epsilon:\source{\epsilon}=j}\lambda_\epsilon \pi_j(t)F_j(t),
    \end{align}
\end{proposition}

\begin{proposition}
\label{cor:tools:DescriptorsEvolution:ode:Kolmogorov:UB}
    In the same setting as Proposition~\ref{prop:tools:DescriptorsEvolution:ode:Kolmogorov}, if
    \begin{equation}
        \label{eq:cor:tools:DescriptorsEvolution:ode:Kolmogorov:AffineSys:UB}
        \E\brk{f(S(t^+))\ |\ f(S(t)),\epsilon} \leq A_\epsilon f(S(t))+b_\epsilon
    \end{equation}
    holds instead of \eqref{eq:prop:tools:DescriptorsEvolution:ode:Kolmogorov:AffineSys} for all $\epsilon\in\Xi$, then there holds
    \begin{align}
        \label{eq:cor:tools:DescriptorsEvolution:ode:Kolmogorov:UB}
        \frac{d}{dt}\prt{F_j(t)\pi_j(t)}
        \leq& \sum_{\epsilon:\arrival{\epsilon}=j}\lambda_\epsilon\pi_{\source{\epsilon}}(t)\prt{A_\epsilon F_{\source\epsilon}(t)+b_\epsilon}\nonumber\\
        &- \sum_{\epsilon:\source{\epsilon}=j}\lambda_\epsilon \pi_j(t)F_j(t).
    \end{align}
\end{proposition}

\begin{remark}
    Proposition~\ref{prop:tools:DescriptorsEvolution:ode:Kolmogorov} describes, as a flow equation, how the evolution of a given state of $n(t)$ contributes to that of $\E [f(S(t))]=\sum_{j\in\N}F_j(t)\pi_j(t)$.
    Consider for instance the simple case where $f(S(t))=1$, then \eqref{eq:prop:tools:DescriptorsEvolution:ode:Kolmogorov} reduces to simple Markov transitions between the states of $n(t)$.
    Hence, $f(S(t))$ can be seen as some weight distribution on the different states of $n(t)$, whose evolution is conditioned by that of $n(t)$.
\end{remark}

%%%%%%%%%%%%%%%%%%%%%%%%%%%%%%%%%%%%%%%
%%%%%%%%%%%%%%%%%%%%%%%%%%%%%%%%%%%%%%%
\section{Applications}
\label{sec:App}
We now apply the general methodology of the previous section on our system described in Section~\ref{sec:sys-description}, \textit{i.e.}, with $\Xi = \bigcup_{n\in\N}\brc{\Gn,\An,\Dn,\Rn}$.
We restrict to two particular characterizations of that setting: 
a fixed-size system where $n(t)=n$ and a more general setting where agents leave and join the system independently.

%%%%%%%%%%%%%%%%%%%%%%%%%%%%%%%%%%%%%%%
\subsection{Fixed-size systems}
\label{sec:App:FixedSize}
For illustrative purpose, we first consider a simple setting with only gossip interactions and replacements, so that the number of agents remains constant, \textit{i.e.}, $n(t)=n$.
They respectively take place with rates $\lambda_{\Gn}=n\lambda_g$ and $\lambda_{\Rn}=n\lambda_r$ for some $\lambda_g,\lambda_r\geq0$, and $\lambda_{\An}=\lambda_{\Dn}=0$ as no arrival nor departure ever happens.
This means that on average $n\lambda_g$ gossip interactions and $n\lambda_r$ replacements happen in the whole system per unit of time, and thus the rate of any event for a given individual agent is independent of $n$.
Moreover, the expected number of gossip interactions taking place between two replacements $\rho=\lambda_g/\lambda_r$ and the probability of a random event being a replacement $p=\frac{1}{1+\rho}$ both remain constant as $n$ grows.

We can now apply Proposition~\ref{prop:tools:DescriptorsEvolution:ode:Kolmogorov} combined with Lemmas~\ref{lem:effect_operations:gossip} and \ref{lem:effect_operations:replacement} to derive the expected evolution of the descriptors for a system subject to gossips and random replacements (\textit{i.e.}, where the replaced agent is uniformly randomly chosen).

\begin{theorem}
\label{thm:FixedSize:RandomRepl:sys-evolution}
    In a system of fixed size subject to random replacements and gossips, there holds
    
    \begin{small}
    \begin{align}
        \label{eq:thm:FixedSize:RandomRepl:sys-evolution}
        \frac{d}{dt} \E X(t)
        = \begin{pmatrix}
            -2\lambda_r &\frac{\lambda_r}{n}\\
            \lambda_g   & -(\lambda_g+\lambda_r)
        \end{pmatrix}
        \E X(t) 
        + \lambda_r\begin{pmatrix}\frac{1}{n}\\1\end{pmatrix}\sigma^2.
    \end{align}
    \end{small}
\end{theorem}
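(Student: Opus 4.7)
The plan is to apply Proposition~\ref{prop:tools:DescriptorsEvolution:ode} to this fixed-size setting, where the underlying Markov chain on $n(t)$ is trivial. First I would observe that because the size is constant at $n$, we have $\pi_n(t)=1$ and $\pi_j(t)=0$ for $j\neq n$, so $X_n(t)=\E X(t)$ and the flow equation for the single state $j=n$ collapses to a plain ODE for $\E X(t)$. Both event types admitted in this setting, gossip and (random) replacement, have $\source{\epsilon}=\arrival{\epsilon}=n$, so each contributes both a negative ``leaving'' term and a positive ``arriving'' term in the flow equation, at the aggregated rates $n\lambda_g$ and $n\lambda_r$ respectively (there are $n$ agents each ringing an individual Poisson clock of rate $\lambda_g$ and $\lambda_r$).

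Next I would substitute the per-event affine maps from Lemmas \ref{lem:effect_operations:gossip} and \ref{lem:effect_operations:replacement} into Proposition~\ref{prop:tools:DescriptorsEvolution:ode}, which yields
\begin{equation*}
\frac{d}{dt}\E X(t) = n\lambda_g\bigl(A_{\Gn}-I\bigr)\E X(t) + n\lambda_r\bigl(A_{\Rn}-I\bigr)\E X(t) + n\lambda_r\, b_{\Rn}.
\end{equation*}
All that then remains is a direct calculation of the two matrix differences. Using \eqref{eq:lem:effect_operations:gossip:A_G} and \eqref{eq:lem:effect_operations:replacement:A_R&b_R} one gets
\begin{equation*}
n\lambda_g(A_{\Gn}-I) = \lambda_g\begin{pmatrix}0&0\\1&-1\end{pmatrix},
\qquad
n\lambda_r(A_{\Rn}-I) = \lambda_r\begin{pmatrix}-2 & \tfrac{1}{n}\\ 0 & -1\end{pmatrix},
\end{equation*}
whose sum reproduces the matrix appearing in \eqref{eq:thm:FixedSize:RandomRepl:sys-evolution}, and $n\lambda_r b_{\Rn}$ gives the claimed constant (up to the factor $\lambda_r$ that appears naturally).

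Before invoking the proposition I would also briefly verify that Assumption~\ref{ass:tools:DescriptorsEvolution:boundedness&zeno} holds in this setting: the non-Zeno condition is automatic because the total event rate $n(\lambda_g+\lambda_r)$ is finite, and the uniform boundedness of $\E(X(t)\mid n(t)=n,\xi)$ follows directly from Lemma~\ref{lem:tools:bound_on_squaredmean}, which bounds both components of $X$ by $\sigma^2$ for any event sequence $\xi$.

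There is essentially no conceptual obstacle here: Proposition~\ref{prop:tools:DescriptorsEvolution:ode} does all the heavy lifting, and once the fixed-size simplification $\pi_n\equiv 1$ is made, the proof reduces to mechanically assembling the two affine contributions from Lemmas~\ref{lem:effect_operations:gossip} and \ref{lem:effect_operations:replacement}. The only step that requires any care is the bookkeeping of total rates ($n\lambda_g$ and $n\lambda_r$ rather than $\lambda_g$ and $\lambda_r$) and making sure the $\sigma^2$ forcing term is correctly collected from the replacement contribution alone, since gossip steps carry no affine offset.
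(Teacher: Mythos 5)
Your proof is correct and takes essentially the same route as the paper's own proof: apply Proposition~\ref{prop:tools:DescriptorsEvolution:ode} with $\pi_n(t)\equiv 1$ and total rates $n\lambda_g$, $n\lambda_r$, then assemble the affine maps from Lemmas~\ref{lem:effect_operations:gossip} and \ref{lem:effect_operations:replacement}. Your parenthetical remark is also justified: the forcing term obtained from $n\lambda_r b_{\Rn}$ is $\lambda_r\,(1/n,\,1)^T\sigma^2$, which is consistent with the fixed points \eqref{eq:FixedSize:RandomRepl:FixedPoint:SqM}--\eqref{eq:FixedSize:RandomRepl:FixedPoint:MoS}, so the missing factor $\lambda_r$ in \eqref{eq:thm:FixedSize:RandomRepl:sys-evolution} as printed appears to be a typographical omission rather than a flaw in your derivation.
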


\begin{proof}
    We consider the function $f(S(t)) = X(t)$ that computes the descriptors.
    The result follows from the application of Proposition~\ref{prop:tools:DescriptorsEvolution:ode:Kolmogorov} combined with Lemmas~\ref{lem:effect_operations:gossip} and \ref{lem:effect_operations:replacement}, where we remind that $\lambda_{\Gn} = n\lambda_g$ and $\lambda_{\Rn} = n\lambda_r$, and where we observe that $\pi_n(t) = 1$ and $F_n(t) = \E X(t)$ for any $t$, so that
    \begin{align*}
        \frac{d}{dt}\E X(t) 
        &= n\lambda_r \prt{A_{\Rn} \E X(t) + b_{\Rn}} + n\lambda_g A_{\Gn} \E X(t)\\
        &-n\prt{\lambda_r + \lambda_g}  \E X(t),
    \end{align*}
    where $A_{\Rn}$ and $b_{\Rn}$ come from Lemma~\ref{lem:effect_operations:replacement} and $A_{\Gn}$ from Lemma~\ref{lem:effect_operations:gossip}.
    A few algebraic steps then conclude the proof.
\end{proof}

\begin{remark}
    A similar yet weaker result than that of Theorem~\ref{thm:FixedSize:RandomRepl:sys-evolution} can be obtained by applying Proposition~\ref{cor:tools:DescriptorsEvolution:ode:Kolmogorov:UB} directly on the variance (denoted $V(t)$ to lighten notations), yielding
    \begin{align}
        \label{eq:FixedSize:RandomRepl:BoundOnVar}
        \frac{d}{dt}\E V(t)%\Var(x(t))
        \leq &-\prt{\lambda_g+\tfrac{n+1}{n}\lambda_r}\E V(t)%\Var(x(t))\nonumber
        + \frac{n^2-1}{n^2}\lambda_r\sigma^2.
    \end{align}
    By contrast, Theorem~\ref{thm:FixedSize:RandomRepl:sys-evolution} yields an exact equality for $\frac{d}{dt}\E V(t)$ since $\E V(t) = \begin{pmatrix}-1&1\end{pmatrix}\E X(t)$, which thus highlights how following two descriptors allows for a more detailed analysis in this case, even if one were only interested in the variance.
\end{remark}
We now analyze the fixed point of \eqref{eq:thm:FixedSize:RandomRepl:sys-evolution} given by
\begin{align}
    \label{eq:FixedSize:RandomRepl:FixedPoint:SqM&MoS}
    &\E\SqM\rvert_{eq} = \frac{2+\rho}{2n(1+\rho)-\rho}\sigma^2&
    &;&
    &\E \MoS\rvert_{eq} = \frac{2n+\rho}{2n(1+\rho)-\rho}\sigma^2,
\end{align}
leading to a variance 
\begin{equation}
    \label{eq:asymptotic_var_constant_n}
    \E \Var(x)\rvert_{eq} 
    = \frac{1-\frac1n}{1+\rho\prt{1-\frac{1}{2n}}}\sigma^2
    \underset{n\to \infty}{\sim} \frac{\sigma^2}{\rho+1},
\end{equation}
where we remind $\rho = \lambda_g/\lambda_r$ is the ratio between gossip and replacement rates.
As gossips become less frequent ($\rho\to0$) the system eventually consists of agents with $n$ i.i.d. values with mean $0$ and variance $\sigma^2$, and $\E \Var(x)\rvert_{eq}\to\frac{n-1}{n}\sigma^2$. 
By contrast, as replacements become less frequent ($\rho\to\infty$) then the system starts behaving as a closed system with $\E \Var(x)\rvert_{eq}\to0$. 
Moreover, as the system size grows the expected variance becomes $\E\Var(x(t))|_{eq}\to\frac{\sigma^2}{1+\rho}=p\sigma^2$ (where we remind $p$ is the probability for an event to be a replacement), consistently with results derived for DT systems in a preliminary version of this work \cite{HendrickxMartin2017CDC}.
In particular, increasing the gossip rate then makes the expected asymptotic variance decay, as the system gets closer to consensus; conversely, increasing the replacement rate makes it increase.

A similar analysis of the convergence rate can be conducted by analyzing the eigenvalues and eigenvectors of the matrix in \eqref{eq:thm:FixedSize:RandomRepl:sys-evolution}.
It is however omitted here due to space limitations.

We illustrate the results above in Figure~\ref{fig:FixedSize:RandomRepl:Estimates&Descriptors:n50} for a system of $n=50$ agents subject to replacements and gossips in such a way that on average one event in twenty is a replacement (\textit{i.e.}, $\rho=19$).
The agent initial values are randomly drawn from a normal distribution with zero mean and constant variance $\sigma^2=1$.
Figure~\ref{fig:FixedSize:RandomRepl:Estimates&Descriptors:n50}(a) shows how convergence is prevented by replacements even if the agents occasionally get close to
consensus with a single realization of the system.
Figures~\ref{fig:FixedSize:RandomRepl:Estimates&Descriptors:n50}(b)-(d) show that the theoretical results of Theorem~\ref{thm:FixedSize:RandomRepl:sys-evolution} and the fixed point in \eqref{eq:FixedSize:RandomRepl:FixedPoint:SqM&MoS} accurately match with the expected evolution of the descriptors from the simulation.

\begin{figure}
\centering
    \includegraphics[width=0.48\textwidth,clip = true, trim=0cm 6.5cm 0cm 6.5cm,keepaspectratio]{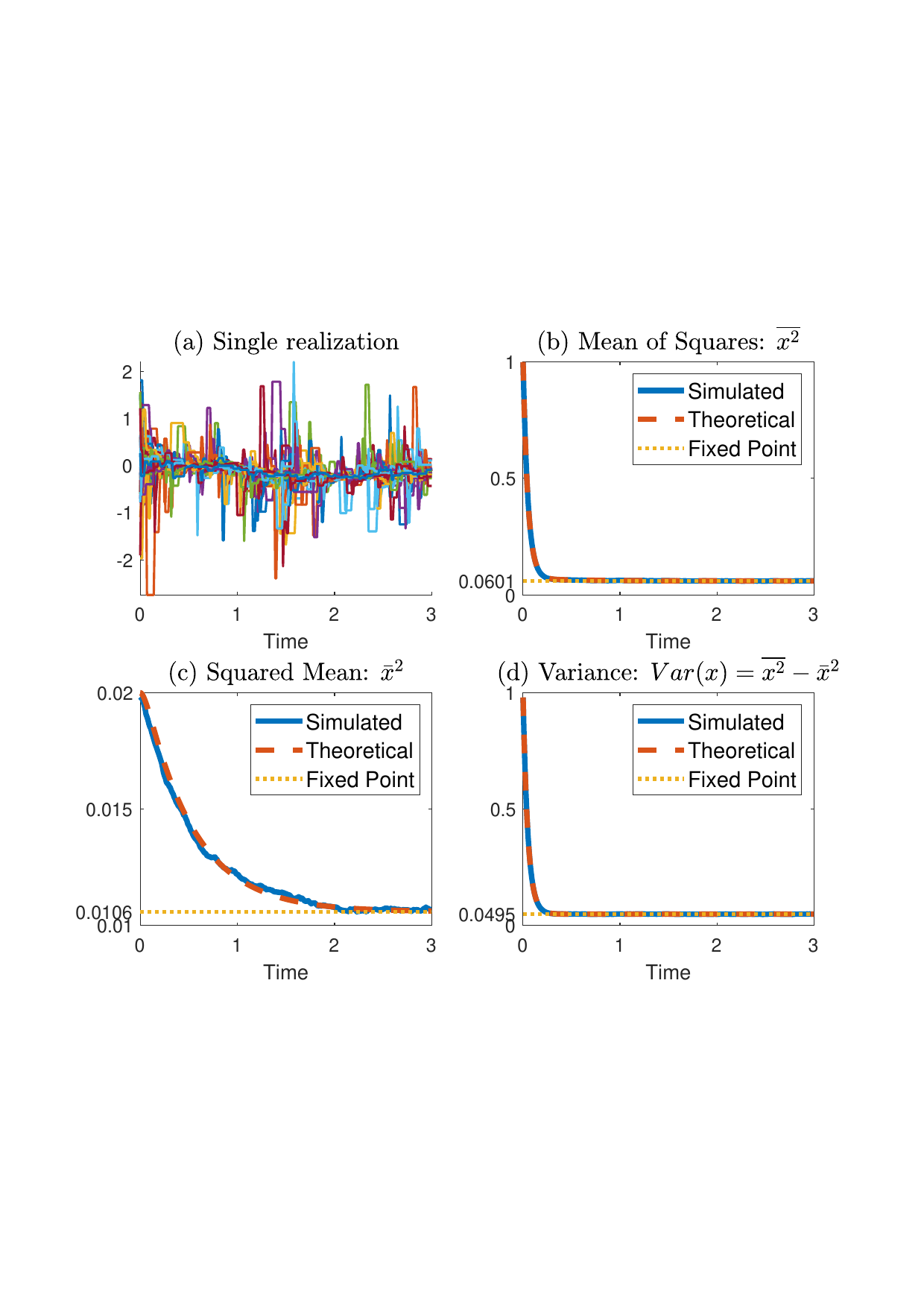}\\
    \caption{Behavior of an open system of $50$ agents subject to random replacements and gossips with rates $\lambda_r=1$ and $\lambda_g=19$, so that $\lambda_r+\lambda_g=20$. (a) depicts the evolution of one realization of the system, where each line is the estimate of an agent. (b), (c) and (d) respectively depict the evolution of the expected mean of squares $\MoS$, squared mean $\SqM$ and variance $\Var(x)$ simulated over $10000$ realizations in plain blue line. Those are compared with the theoretical results as obtained in Theorem~\ref{thm:FixedSize:RandomRepl:sys-evolution} in dashed red line (the variance is deduced from $\Var(x) = \MoS-\SqM$). The expected fixed points from equations \eqref{eq:FixedSize:RandomRepl:FixedPoint:SqM&MoS} and \eqref{eq:asymptotic_var_constant_n} are also depicted in yellow dotted line.}
    \label{fig:FixedSize:RandomRepl:Estimates&Descriptors:n50}
\end{figure}

%%%%%%%%%%%%%%%%%%%%%%%%%%%%%%%%%%%%%%%
\subsection{Varying-size systems}
\label{sec:App:VarSize}
We now consider a more general case where arrivals and departures are decoupled, so that the system size changes with time.
The rate of arrivals is independent of the system size, whereas the rates of gossip interactions and departures scale linearly with it, so that when the system size is $n$ one has $\lambda_{\An}=\lambda_a$, $\lambda_{\Gn}=n\lambda_g$ and $\lambda_{\Dn}=n\lambda_d$ for some constants $\lambda_a,\lambda_d,\lambda_g\geq0$ (and $\lambda_{\Rn}=0$ as we assume no replacement happens).
It implies that the rates of gossip and departures per individual agent remain independent of $n$.
We could derive an exact expression for the evolution of the descriptors following the same approach as for the fixed-size system.
However, for simplicity and due to space limitations, we directly work on the expected variance, which only allows obtaining the following upper bound on its evolution.

\begin{theorem}
\label{thm:VarSize:sys-evolution}
    In a system of variable size subject to arrivals, departures and gossips as described above, there holds
    %\begin{small}
    \begin{align}
        \label{eq:thm:VarSize:sys-evolution}
        \frac{d}{dt} \E \Var(x(t))
        \leq -\lambda_g \E \Var(x(t)) + \E\left[\frac{\lambda_a}{n(t)+1}\right]\sigma^2.
    \end{align}
    %\end{small}
\end{theorem}
\begin{proof}
    Let $f(S(t)) = \Var(x(t))$. 
    The result follows from applying Proposition~\ref{cor:tools:DescriptorsEvolution:ode:Kolmogorov:UB} combined with Lemmas~\ref{lem:effect_operations:gossip} to \ref{lem:effect_operations:departure}, yielding
    \begin{align*}
        \frac{d}{dt}V_j\pi_j 
        &\leq -\prt{\lambda_a+\lambda_g+j\lambda_d}V_j\pi_j + \lambda_a\tfrac{j-1}{j}V_{j-1}\pi_{j-1}\\
        &\ \ \ + j\lambda_d\prt{1-\tfrac{1}{j^2}}V_{j+1}\pi_{j+1} + \lambda_a\pi_{j-1}\tfrac{\sigma^2}{j},
    \end{align*}
    with $V_j := \E\prt{\Var(x(t))|n(t)=j}$ and $\pi_j=P\prt{n(t)=j}$, where we omit the dependence to the time to lighten the notations.
    Summing up over all values of $j$ then yields
    
    \vspace{-0.25cm}
    \begin{small}
    \begin{align*}
        \frac{d}{dt}\E V
        &\leq -\lambda_g\E V + \sum_{j=0}^\infty \lambda_a\tfrac{\pi_j}{j+1}\sigma^2 - \sum_{j=2}^\infty \prt{\tfrac{\lambda_a}{j+1}+\tfrac{j\lambda_d}{(j-1)^2}}V_j\pi_j,
    \end{align*}
    \end{small}
    \vspace{-0.25cm}
    
    \noindent where $V = \Var(x(t))$.
    Since the last term is nonpositive, it can be bounded by $0$, which concludes the proof.
\end{proof}

The result above provides an upper bound on the expected evolution of the variance and is parallel to that provided in \eqref{eq:FixedSize:RandomRepl:BoundOnVar} for the fixed-size system case.
Interestingly, it overestimates the effect of arrivals on the variance, while neglecting the slight favourable impact of departures, thus introducing conservatism in the bound.

Observe that the system size evolves following a birth-death process as depicted in Figure~\ref{fig:SystemSize:BirthDeathProcess} with birth and death rates respectively defined by the arrival and departure rates.
Hence, we have the following lemma, where one can show $\nn = \lambda_a/\lambda_d$ is the characteristic system size, such that $\lim_{t\to\infty}\E n(t)=\nn$. 
\begin{lemma}
\label{lem:VarSize:BirthDeathProcess}
    Assume $\nn:=\lambda_a/\lambda_d<\infty$, then for all $i\in\N$ there holds $\pi_i^* = \lim_{t\to\infty}\pi_i(t) = \frac{\nn^i}{i!}e^{-\nn}$.
\end{lemma}
\begin{proof}
    Standard results on birth-death processes show that if $n(t)$ is ergodic, then $\pi_i^* = \lim_{t\to\infty}\pi_i(t)$ exists for all $i$ and satisfies $\pi_i^* = \pi_0^*\prod_{j=1}^i \frac{\lambda_a}{j\lambda_d}$ with
    \begin{align*}
        \pi_0^* 
        = \frac{1}{1+\sum_{k=1}^\infty \prod_{j=1}^k \frac{\lambda_a}{j\lambda_d}}
        = \frac{1}{1+\sum_{k=1}^\infty \frac{\nn^k}{k!}}
        = e^{-\nn},
    \end{align*}
    yielding $\pi_i^* = \frac{\nn^i}{i!}e^{-\nn}$.
    The conclusion then follows from the fact that $n(t)$ is ergodic if $\lambda_a/\lambda_d<\infty$.
\end{proof}

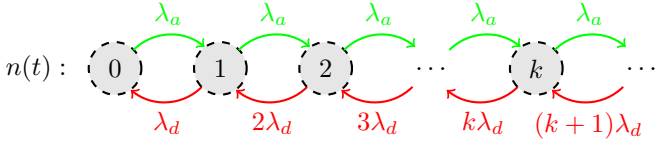
\begin{figure}
    \centering
    \tikzstyle{State}=[thick, dashed, fill=gray!20]
    \begin{tikzpicture}[scale=0.7]
        \draw (-7.5,0) node {$n(t):$};
        
        \draw [State] (-6,0) circle(0.5);
        \draw (-6,0) node {$0$} ;
        \draw[->, green, thick] (-6+0.35,0.35) arc(135:50:1) node[midway, above] {$\lambda_a$} ;
        \draw[->, red, thick] (-6+1.65,-0.35) arc(315:230:1) node[midway, below] {$\lambda_d$} ;
        
        \draw [State] (-4,0) circle(0.5);
        \draw (-4,0) node {$1$} ;
        \draw[->, green, thick] (-4+0.35,0.35) arc(135:50:1) node[midway, above] {$\lambda_a$} ;
        \draw[->, red, thick] (-4+1.65,-0.35) arc(315:230:1) node[midway, below] {$2\lambda_d$} ;
        
        \draw [State] (-2,0) circle(0.5);
        \draw (-2,0) node {$2$} ;
        \draw[->, green, thick] (-2+0.35,0.35)    arc(135:50:1) node[midway, above] {$\lambda_a$} ;
        \draw[->, red, thick] (-2+1.65,-0.35) arc(315:230:1) node[midway, below] {$3\lambda_d$} ;
        
        \draw (0,0) node {$\ldots$};
        \draw[->, green, thick] (0.35,0.35) arc(135:50:1) node[midway, above] {$\lambda_a$} ;
        \draw[->, red, thick] (1.65,-0.35) arc(315:230:1) node[midway, below] {$k\lambda_d$} ;
        
        \draw [State] (2,0) circle(0.5);
        \draw (2,0) node {$k$} ;
        \draw[->, green, thick] (2+0.35,0.35) arc(135:50:1) node[midway, above] {$\lambda_a$} ;
        \draw[->, red, thick] (2+1.65,-0.35) arc(315:230:1) node[midway, below] {$(k+1)\lambda_d$} ;
        
        \draw (4,0) node {$\ldots$};
    \end{tikzpicture}
    
    \caption{Continuous-time Markov Chain representing the evolution of the system size as a birth-death process where the birth and death rates are respectively given by $\lambda_a$ and $j\lambda_d$ when the system size is $n(t)=j$.}
    \label{fig:SystemSize:BirthDeathProcess}
\end{figure}

Lemma~\ref{lem:VarSize:BirthDeathProcess} allows the derivation of the following corollary.

\begin{corollary}
\label{cor:VarSize:AsymptoticEVar}
    Let $\nn=\lambda_a/\lambda_d$ and $\gamma = \lambda_g/\lambda_d$, there holds
    \begin{align}
        \label{eq:cor:VarSize:AsymptoticEVar}
        \lim_{t\to\infty}\E\Var(x(t)) \leq \frac{1-e^{-\nn}}{\gamma}\sigma^2,
    \end{align}
\end{corollary}
\begin{proof}
    Let us define $h(t) := \E\left[\frac{\lambda_a\sigma^2}{n(t)+1}\right] = \lambda_a\sum_{j=0}^\infty \frac{\pi_j(t)}{j+1}\sigma^2$.
    Then, using Lemma~\ref{lem:VarSize:BirthDeathProcess}, for $\lambda_d\neq0$ there holds 
    \begin{align*}
        \lim_{t\to\infty} h(t) = \lambda_ae^{-\nn}\sum_{j=0}^\infty \frac{\nn^j}{(j+1)!}\sigma^2 = \lambda_a\frac{1-e^{-\nn}}{\nn}\sigma^2.
    \end{align*}
    Gr\"onwall's lemma then yields
    \begin{align*}
        \E\Var(x(t))
        \leq e^{-\lambda_gt}\E\Var(x(0)) + \int_0^th(s)e^{-\lambda_g(t-s)}\mathrm ds.
    \end{align*}
    Standard results from dynamical systems then yield
    \begin{align*}
        \lim_{t\to\infty}\int_0^th(s)e^{-\lambda_g(t-s)}\mathrm ds 
        = \lim_{t\to\infty} \frac{h(t)}{\lambda_g} 
        = \frac{\lambda_a}{\lambda_g}\frac{1-e^{-\nn}}{\nn}\sigma^2,
    \end{align*}
    which concludes the proof.
\end{proof}

As soon as $\nn$ is not very small, the bound on Corollary~\ref{cor:VarSize:AsymptoticEVar} is approximately $\frac{\sigma^2}{\gamma}$, and hence the expected variance decays to zero when $\gamma$ becomes large (\textit{i.e.}, for gossip steps more frequent than departures).
Interestingly, this bound is inversely proportional to the expected number of gossip interactions involving a given agent before it leaves the system $\gamma$, similarly to \eqref{eq:asymptotic_var_constant_n} obtained in Section~\ref{sec:App:FixedSize} for the fixed-size case (up to a constant $1$).
More generally, when $\lambda_d\to0$, then \eqref{eq:cor:VarSize:AsymptoticEVar} becomes $\lim_{t\to\infty}\E\Var(x(t)) \to 0$.
One can actually show that when $\lambda_d=0$ (\textit{i.e.}, for an only growing system), then $\lim_{t\to\infty}\E\Var(x(t)) = 0$, using a similar proof directly applied with $\lambda_d=0$.

Figure~\ref{fig:VarSize:AsymptoticEVar} compares the result of Corollary~\ref{cor:VarSize:AsymptoticEVar} with simulated results. 
It appears that our bound is larger than the actual performance by a factor around $4$. 
This comes from the fact that its derivation is rather straightforward, and a more detailed analysis, omitted here, would yield more accurate results.

\begin{figure}
\centering
    \includegraphics[width=0.4\textwidth,clip = true, trim=3cm 11cm 3cm 11cm,keepaspectratio]{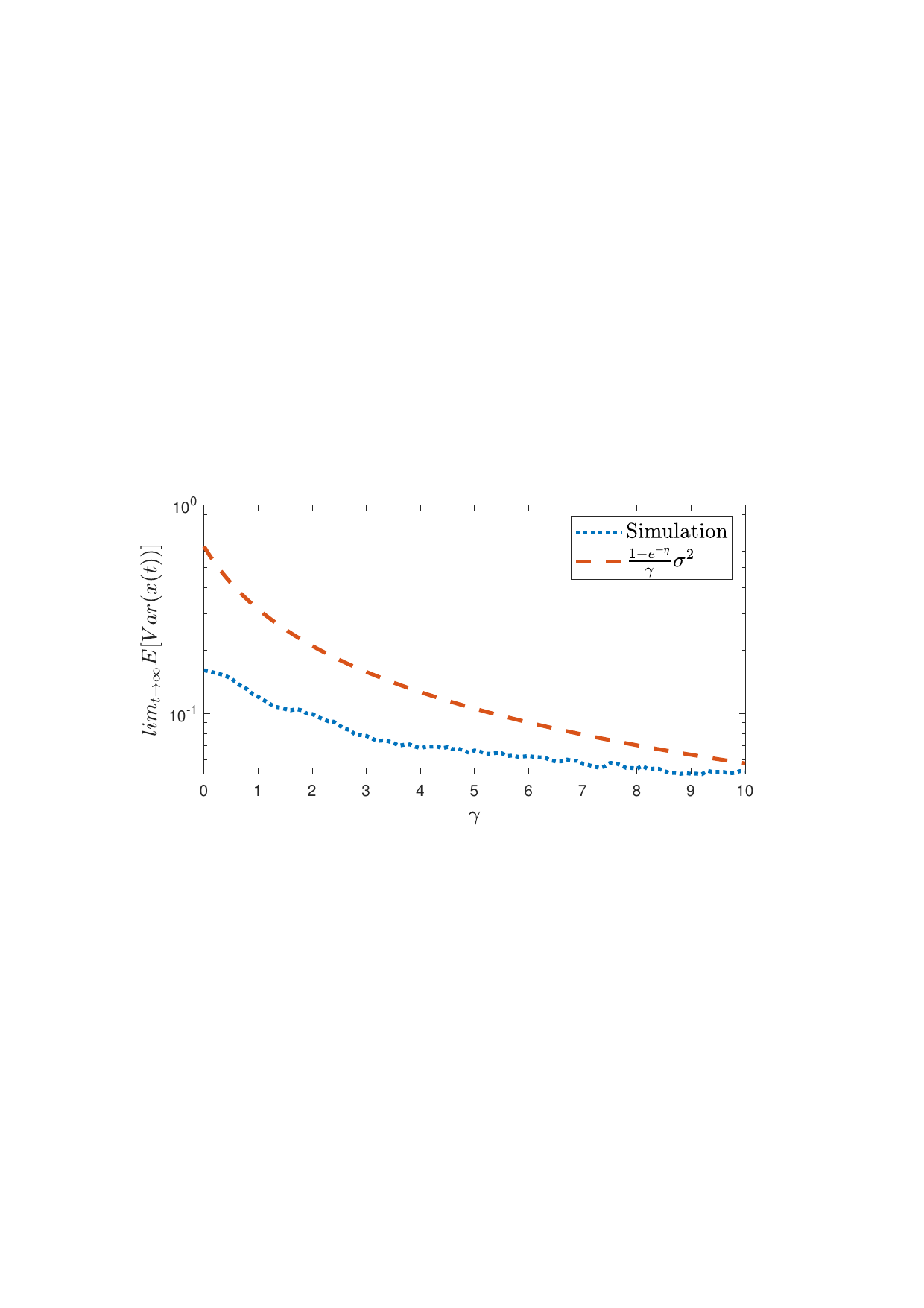}\\
    \caption{Evolution of the expected asymptotic variance of a system of initially $5$ agents subject to arrivals, departures and gossips as described in Section~\ref{sec:App:VarSize} with $\lambda_a=\lambda_d=1$ and for different values of $\lambda_g$ with respect to $\gamma=\lambda_g/\lambda_d$.
    The simulation (in dotted blue) is performed over $1000$ events and compared with the theoretical result \eqref{eq:cor:VarSize:AsymptoticEVar} from Corollary~\ref{cor:VarSize:AsymptoticEVar}.}
    \label{fig:VarSize:AsymptoticEVar}
\end{figure}

%%%%%%%%%%%%%%%%%%%%%%%%%%%%%%%%%%%%%%%
%%%%%%%%%%%%%%%%%%%%%%%%%%%%%%%%%%%%%%%
\section{Conclusions}
\label{sec:Conclusion}

In this paper, we considered open systems where agents can join and leave, which raise several new challenges including dimension variations and the absence of usual convergence.
We have shown that such systems subject to agent interactions can be characterized in terms of scale-independent descriptors whose evolution is governed by a fixed-size linear system.
We applied this approach to analyze the behavior of all-to-all pairwise gossips in fixed-size systems subject to replacements, and have shown that the expected evolution of the two first moments (and thus of the variance) is characterized exactly.

Our approach relies on continuous-time tools that can be applied to general settings, relying on more complex definitions of arrivals and departures and different types of interactions.
We believe our methodology can also be applied to the analysis of interactions restricted to a graph: the challenge is then to find an appropriate set of descriptors that exploit the graph topology, and tight results might not be guaranteed with an inappropriate choice (\textit{e.g.}, for complete bipartite graph, an appropriate choice would be the two first moments for both parts as well as the covariance).
Other challenges left untackled consist in establishing the probability distribution of our descriptors instead of only their expected value, and in the introduction of noise and delay in the analysis.

%%%%%%%%%%%%%%%%%%%%%%%%%%%%%%%%%%%%%%%
%%%%%%%%%%%%%%%%%%%%%%%%%%%%%%%%%%%%%%%
\bibliographystyle{IEEEtran}
\bibliography{ITAC_ArXiV_OpenGossips_R2.bib}

%%%%%%%%%%%%%%%%%%%%%%%%%%%%%%%%%%%%%%%
%%%%%%%%%%%%%%%%%%%%%%%%%%%%%%%%%%%%%%%
\appendix

%%%%%%%%%%%%%%%%%%%%%%%%%%%%%%%%%%%%%%%
\subsection{Proof of Lemma \ref{lem:sys-description:bound_on_expected_quantities} (Upper bound on the descriptors)}
\label{sec:appendix:bound_squaredmean}

\begin{proof}
    Let $S$ denote the set of agents in the system at time $t$.
    We first prove 
    that 
    $
    \E \prt{\sum_{i\in S} x_i(t)\ |\ |S|}^2 \leq  \abs{S}\sigma^2
    $ 
    and 
    $
    \E \prt{\sum_{i\in S} x_i(t)^2\ |\ |S|} \leq  \abs{S}\sigma^2
    $ 
    conditional to a particular sequence of events $\omega$, which is equivalent to considering a fixed sequence.
    We then obtain the final result by taking its expectation over all possible sequences $\omega$.
    
    Let us fix some time $t\geq0$ and some sequence of events $\omega$ starting at time $0$ such that the set of agents at time $t$ is $\mathcal N(t)=S$, and let $\T$ be the set of the labels of all the agents that have been in the system at some time between times $0$ and $t$ (hence $S\subseteq\T$).
    Those are determined by $\omega$.
    
    Moreover, let $a_i$ and $d_i$ respectively denote the arrival and departure times of agent $i$ in the system, and we define the vector $\xi(s)\in\R^{|\mathcal T|}$ for $0\leq s \leq t$ such that for all $i\in\mathcal T$:
    \begin{equation*}
        \xi_i(s) := 
        \begin{cases}
            x_i(a_i) &\hbox{if } s\leq a_i\\
            x_i(s) &\hbox{if } a_i \leq s\leq d_i\\
            x_i(d_i) &\hbox{if } d_i\leq s
        \end{cases}.
    \end{equation*}
    Arrivals and departures have no direct effect on $\xi(s)$, whereas gossip interactions result in the multiplication of $\xi$ by a doubly-stochastic matrix.
    Let $A$ be the product of all those matrices since time $0$, then $A$ is doubly-stochastic, and there holds $\xi(t) = A\xi(0)$.
    One then has $\sum_{i\in S}x_i(t)=\sum_{i\in S,j\in\T} A_{ij}\xi_j(0)=\sum_{j\in\T}w_j\xi_j(0)$ with $w_j = \sum_{i\in S}A_{ij}$.
    
    Since $A$ is doubly-stochastic, and since $S\subseteq\T$, one has $w_j \in [0,1]$, and thus $w_j^2\leq w_j$. 
    Moreover, there holds $\sum_{i\in S}A_{ij}^2 = w_j^2-\sum_{i,k\in S, i\neq k}A_{ij}A_{kj}\leq w_j^2$, and hence
    \vspace{-0.25cm}
    \begin{small}
    \begin{align}
    \label{eq:proof:lem:tools:Bound:intermediateResult:SqM}
        &\E\prt{\sum\nolimits_{i\in S} x_i(t)}^2
        = \E\prt{\sum\nolimits_{j\in\T}w_j\xi_j(0)}^2\nonumber\\
        &\ \ = \sum\nolimits_{j\in\T}w_j^2\E\prt{\xi_j(0)^2} + \sum\nolimits_{\substack{j,k\in\T\\j\neq k}} w_jw_k \E\brk{\xi_j(0)\xi_k(0)}\nonumber\\
        &\ \ = \sigma^2\sum\nolimits_{j\in\T} w_j^2
        \leq \sigma^2\sum\nolimits_{j\in\T}w_j,\\
        &\E\prt{\sum\nolimits_{i\in S} x_i(t)^2}
        = \E\prt{\sum\nolimits_{i\in S}\prt{\sum\nolimits_{j\in\T}A_{ij}\xi_j(0)}^2}\nonumber\\
        \label{eq:proof:lem:tools:Bound:intermediateResult:MoS}
        &\ \ = \sum\nolimits_{j\in\T}\sum\nolimits_{i\in S}A_{ij}^2\E\prt{\xi_j(0)^2} \leq \sigma^2\sum\nolimits_{j\in\T}w_j.
    \end{align}
    \end{small}
    \vspace{-0.25cm}
    
    \noindent where the absence of correlation between the initial values $\xi_j(0)$ were used to nullify the crossed products.
    Finally, since $\sum_{j\in\T}w_j = \sum_{i\in S}\sum_{j\in\T}A_{ij}=|S|$, \eqref{eq:proof:lem:tools:Bound:intermediateResult:SqM} and \eqref{eq:proof:lem:tools:Bound:intermediateResult:MoS} become
    
    \vspace{-0.25cm}
    \begin{small}
    \begin{align}
    \label{eq:proof:lem:tools:Bound:FixedSequenceResult}
        &\E\prt{\sum_{i\in S} x_i(t)}^2
        \leq |S|\sigma^2;&
        &\E\prt{\sum_{i\in S} x_i(t)^2}
        \leq |S|\sigma^2.
    \end{align}
    \end{small}
    \vspace{-0.25cm}
    
    Assume now that $\omega$ is a stochastic sequence of events such that $n(t)=|S|$, then $\E\brk{\prt{\sum_{i\in S} x_i(t)}^2\ |\ \omega,|S|}\leq |S|\sigma^2$ and $\E\brk{\prt{\sum_{i\in S} x_i(t)^2}\ |\ \omega,|S|}\leq |S|\sigma^2$ hold from \eqref{eq:proof:lem:tools:Bound:FixedSequenceResult}.
    Since $\omega$ is independent of the agents initial values, taking the expectation over all possible $\omega$ yields
    
    \vspace{-0.5cm}
    \begin{small}
    \begin{align*}
        \E\brk{\prt{\sum\nolimits_{i\in S} x_i(t)}^2\ \bigg|\ |S|} 
        &\leq \E\Big[|S|\sigma^2\ |\ |S|\Big]
        = |S|\sigma^2;\\
        \E\brk{\prt{\sum\nolimits_{i\in S} x_i(t)^2}\ \bigg|\ |S|} 
        &\leq \E\Big[|S|\sigma^2\ |\ |S|\Big]
        = |S|\sigma^2,
    \end{align*}
    \end{small}
    and it directly follows that
    
    \vspace{-0.5cm}
    \begin{small}
    \begin{align*}
        \E\prt{\bar x^2|n(t)=j}
        &= \E\prt{\prt{\tfrac{1}{|S|}\sum\nolimits_{i\in S}x_i(t)}^2\Big||S|=j}
        \leq \frac{1}{j^2}j\sigma^2;\\
        \E\prt{\overline{x^2}|n(t)=j}
        &= \E\prt{\tfrac{1}{|S|}\sum\nolimits_{i\in S}x_i^2(t)\big||S|=j}
        \leq \frac1j j\sigma^2,
    \end{align*}
    \end{small}
    \vspace{-0.5cm}
    
    which concludes the proof.
\end{proof}

%%%%%%%%%%%%%%%%%%%%%%%%%%%%%%%%%%%%%%%
\subsection{Proofs of Lemmas \ref{lem:effect_operations:gossip}, \ref{lem:effect_operations:arrival}, \ref{lem:effect_operations:departure} and \ref{lem:effect_operations:replacement} (Effect of events)}
\label{sec:appendix:EffectOfEvents}

\subsubsection{Proof of Lemma~\ref{lem:effect_operations:gossip} (Gossip step)}
    Let us first fix the nodes $i,j$ involved in the gossip step. 
    Firstly, observe that $x_i^++x_j^+ = 2\frac{x_i+x_j}{2}=x_i+x_j$, and that $x_k^+=x_k$ for all $k\neq i,j$. 
    Hence $\bar x^+ = \bar x$, which establishes the fist line of \eqref{eq:lem:effect_operations:gossip:X}. 
    Secondly, since $x_k=x_k^+$ for every $k\neq i,j$, there holds
    \begin{align}\label{eq:gossip_avg_x2}
        \overline{(x^+)^2} 
        &= \frac{1}{n}\sum_{k=1}^n(x_k^+)^2 
        = \overline{x^2} + \frac{1}{n}\prt{ 2\prt{\tfrac{x_i+x_j}{2}}^2- x_i^2-x_j^2} \nonumber \\
        &= \overline{x^2} + \tfrac{1}{n}\prt{x_ix_j - \tfrac{1}{2}x_i^2-\tfrac{1}{2}x_j^2}.
    \end{align}
    Observe that $\E(x_i^2|x)= \E(x_j^2|x)= \overline{x^2}$ and $\E(x_ix_j|x)=\bar x^2$. 
    Taking the expectation with respect to $i$ and $j$ in \eqref{eq:gossip_avg_x2} yields
    \begin{align*}
        \E\prt{\overline{(x^+)^2}| x}
        = \prt{1-\tfrac{1}{n}}\overline{x^2} + \tfrac{1}{n} \bar x^2, 
    \end{align*}
    from which the second line of \eqref{eq:lem:effect_operations:gossip:X} follows. 
    The result on the variance follows from computing $\begin{pmatrix}-1&1\end{pmatrix}\E\prt{X^+|X,\Gn}$.
    
%%%%%%%%%%%%%%%%%%%%%%%%%%%%%%%%%%%%%%%    
\subsubsection{Proof of Lemma~\ref{lem:effect_operations:arrival} (Arrival)}
    We label $n+1$ the arriving agent for simplicity, so that $x_k^+=x_k$ for all $k\leq n$.  
    We begin by computing the new average :
    
    \vspace{-0.5cm}
    \begin{small}
    \begin{align}\label{eq:arrival_average}
        \bar x^+
        &= \frac{1}{n+1}\prt{x^+_{n+1} + \sum_{k=1}^{n}x_k}
        = \frac{n}{n+1} \bar x + \frac{1}{n+1}x^+_{n+1}.
    \end{align}
    \end{small}
    \vspace{-0.35cm}
    
    Since $\E x^+_{n+1} = 0$, we have $\E(\bar x^+|x) = \frac{n}{n+1}\bar x$. 
    By exactly the same reasoning but using $\E(x^+_{n+1})^2 = \sigma^2$  we also obtain
    \begin{align}\label{eq:arrival_x^2}
        \E\prt{\overline{(x^+)^2}|x} 
        &= \tfrac{n}{n+1}\overline{x^2} + \tfrac{1}{n+1}\sigma^2,
    \end{align}
    from which the second line of \eqref{eq:lem:effect_operations:arrival:X} follows.
    Turning to the first line, we obtain from \eqref{eq:arrival_average}
    \begin{align*}
        \E((\bar x^+)^2|x) 
        =& \tfrac{1}{(n+1)^2}\prt{n^2\SqM + n \bar x \E x^+_{n+1}+\E (x_{n+1}^+)^2} \\
        =& \tfrac{n^2}{(n+1)^2} \SqM + 0  +\tfrac{1}{(n+1)^2} \sigma^2.
    \end{align*}
    Computing $\begin{pmatrix}-1&1\end{pmatrix}\E\prt{X^+|X,\Gn}$ then yields the variance, where we use Lemma~\ref{lem:sys-description:bound_on_expected_quantities} to obtain the inequality.
    
%%%%%%%%%%%%%%%%%%%%%%%%%%%%%%%%%%%%%%%  
\subsubsection{Proof of Lemma~\ref{lem:effect_operations:departure} (Departure)}  
    Let $j$ be the randomly selected agent that leaves the system. 
    It follows
    \begin{align}\label{eq:mean_departure}
        \bar x^+ = \frac{1}{n-1}\prt{\prt{\sum_{k=1}^n x_k} - x_j}
        = \frac{1}{n-1}\prt{n\bar x - x_j}.
    \end{align}
    By exactly the same reasoning, there holds
    $\overline {(x^+)^2}= \frac{1}{n-1}\prt{n \overline{x^2} - x_j^2}$.
    Since $j$ is randomly selected, $\E(x_j^2|x) = \overline{x^2}$. 
    Hence, 
    $\E \prt{\overline {(x^+)^2}|x} = \frac{1}{n-1}\prt{n \E \overline{x^2} - \E \overline{x^2}}=\E \overline {x^2}$,
    which implies the second line of \eqref{eq:lem:effect_operations:departure:X}. 
    For the first line, taking into account $\E(x_j|x) = \bar x$, it follows from \eqref{eq:mean_departure} that
    \begin{align*}
        \E ( (\bar x^+)^2|x)
        &= \tfrac{1}{(n-1)^2}\prt{n^2\SqM - 2n\bar x\E(x_j|x) + \E(x_j^2|x)}\\
        &=  \tfrac{n^2-2n}{(n-1)^2}\SqM  + \tfrac{1}{(n-1)^2} \MoS.
    \end{align*}
    Computing $\begin{pmatrix}-1&1\end{pmatrix}\E\prt{X^+|X,\Gn}$ yields the variance.
    
%%%%%%%%%%%%%%%%%%%%%%%%%%%%%%%%%%%%%%%    
\subsubsection{Proof of Lemma~\ref{lem:effect_operations:replacement} (Replacement)}
    The matrix equality \eqref{eq:lem:effect_operations:replacement:X} follows from a combination of Lemmas \ref{lem:effect_operations:departure} and \ref{lem:effect_operations:arrival}, the latter applied to a system of size $n-1$ joined by an $n^{th}$ agent.
    The inequality on the variance follows from
    
    \vspace{-0.5cm}
    \begin{small}
    \begin{align*}
        \E(\Var(x^+)|x,\Rn)
        &= -\frac{n-2}{n} \bar{x}^2 - \frac{1}{n^2}\overline{x^2} - \frac{\sigma^2}{n^2} 
        +\frac{n-1}{n} \overline{x^2} + \frac{\sigma^2}{n}\\
        &= \frac{n-1}{n^2} (\bar{x}^2+\sigma^2) 
         +\frac{n^2-n-1}{n^2} \Var(x)\\
         &\le
         \frac{n^2-n-1}{n^2} \Var(x)
         + \frac{(n^2-1)\sigma^2}{n^3},
    \end{align*}
    \end{small}
    where we used Lemma~\ref{lem:sys-description:bound_on_expected_quantities} for the last inequality.
      
%%%%%%%%%%%%%%%%%%%%%%%%%%%%%%%%%%%%%%%
\subsection{Proofs of Propositions \ref{prop:tools:DescriptorsEvolution:ode:Kolmogorov} and \ref{cor:tools:DescriptorsEvolution:ode:Kolmogorov:UB}}
\label{sec:appendix:tools:DescriptorsEvolutionProof:Kolmogorov}

We first provide the following definitions \cite{MarkovProcesses,AppliedStochasticProcesses}.
\begin{definition}[Transition kernel]
\label{def:TransitionKernel}
    Let $(E,\mathcal E)$ be a measurable space (\textit{i.e.}, with $\mathcal E$ a collection of subsets of $E$) and $S(t)$ an homogeneous Markov process defined over the state space $E$, then the \emph{transition kernel} or \emph{transition function} of $S(t)$ is the probability measure $\kappa_{\tau}:E\times\mathcal E \to [0,1]$ which satisfies for all $z\in E$, $B\in\mathcal E$ and all time $t$
    \begin{equation}
        \label{eq:def:TransitionKernel}
        \kappa_{\tau}(z,B) = P[S(t+\tau) \in B \ | \ S(t)=z].
    \end{equation}
\end{definition}
\begin{definition}[Infinitesimal generator operator]
\label{def:InfinitesimalGenerator}
    Let $(E,\mathcal E)$ be a measurable space and $S(t)$ an homogeneous continuous-time Markov process with state space $E$,.
    Let $\mathcal F_b(E)$ the set of measurable bounded functions $f:E\to\R^d$, then the \emph{infinitesimal generator operator} of $S$ is the operator $\mathcal L:\mathcal F_b(E)\to\mathcal F_b(E)$ such that for all $f\in\mathcal F_b(E)$ and $x\in E$ there holds
    \begin{equation}
        \label{eq:def:KolmogorovProof:L:lim}
        \mathcal Lf(x) = \lim_{\tau\to0}\frac{\E[f(S(t+\tau))\ |\ S(t)=x] - f(x)}{\tau}.
    \end{equation}
\end{definition}
In the particular case of continuous-time homogeneous Markov jump processes, one can define the new nonnegative measure $q:E\times\mathcal E\to[0,+\infty)$ called \emph{instantaneous jump rate} of $S$, which satisfies $\kappa_\tau(z,B) = q(z,B)\tau + o(\tau)$ for all $z\in E$, $B\in \mathcal E$ and all $\tau\geq0$.
Hence $q$ measures the rate at which $S(t)$ jumps from $z$ to one of the states contained in $B$.
In that case, there holds
\begin{equation}
    \label{eq:def:KolmogorovProof:L:TK}
    \mathcal Lf(x) = \int(f(y)-f(x))q(x,dy)
\end{equation}
We refer to the Example \emph{Pure jump process} in Section 5.1 of \cite{MarkovProcesses} for details about this.

\begin{proof}[Proof of Proposition~\ref{prop:tools:DescriptorsEvolution:ode:Kolmogorov}]
    Let $\mathds1_a$ be the indicator function such that $\mathds1_a(n)=1$ if $n=a$ and $\mathds1_a(n)=0$ otherwise, and let $f_j(S(t)) = f(S(t))\mathds1_j(n(t))$ for some fixed $j\in\N$, so that
    \begin{align*}
        \E[f_j(S(t))] = \sum_{k\in\N} \E[f_j(S(t))|n(t)=k] \pi_k(t) = F_j(t)\pi_j(t).
    \end{align*}
    Hence, $\frac{d}{dt}(F_j(t)\pi_j(t)) = \frac{d}{dt}\E[f_j(S(t))]$, and we compute
    \begin{small}
    \begin{align*}
        \frac{d}{dt}\E[f_j(S(t))]
        &= \lim_{\tau\to0}\frac{\E[f_j(S(t+\tau)] - \E[f_j(S(t))]}{\tau}\\
        &= \lim_{\tau\to0}\frac{\E\big[\E[f_j(S(t+\tau)|S(t)]\big] - \E[f_j(S(t))]}{\tau}\\
        &= \E\left[\lim_{\tau\to0}\frac{\E[f_j(S(t+\tau))|S(t)] - f_j(S(t))}{\tau}\right]\\
        &= \E[\mathcal Lf_j(S(t))],
    \end{align*}
    \end{small}
    where the two last equalities respectively follow from linearity and from Definition~\ref{def:InfinitesimalGenerator}, with $\mathcal L$ the infinitesimal generator operator associated with $S(t)$.
    Let $g_{z,\epsilon}(y)$ denote the probability density function related to the state taken by $S(t)$ after an event of $\epsilon$ given that its state right before was $z\in E$, where $z = (n_z,x_z)$ with $n_z\in\N$ and $x_z\in\R^{n_z}$. 
    There holds
    \begin{equation}
    \label{eq:prop:tools:DescriptorsEvolution:ode:Kolmogorov:proof:TK}
        q(z,B) = \sum_{\epsilon\in\Xi}\lambda_\epsilon\mathds1_{\source{\epsilon}}(n_z)\int_Bg_{z,\epsilon}(y)\mathrm dy,
    \end{equation}
    where we remind $\source\epsilon$ is the state of $n(t)$ before the event of $\epsilon$.
    In the sequel we omit the dependence to the time to lighten the notation.
    Using \eqref{eq:def:KolmogorovProof:L:TK} with \eqref{eq:prop:tools:DescriptorsEvolution:ode:Kolmogorov:proof:TK} yields
    \begin{align*}
        \mathcal L f_j(S)
        &=\sum_{\epsilon\in\Xi}\lambda_\epsilon\mathds1_{\source\epsilon}(n)\int_E\prt{f_j(y)-f_j(S)}g_{S,\epsilon}(y)\mathrm dy\\
        &= \sum_{\epsilon\in\Xi}\lambda_\epsilon\mathds1_{\source\epsilon}(n) \prt{\E\brk{f_j(S^+)\ |\ S,\epsilon} - f_j(S)}.
    \end{align*}
    By taking the expectation over $S$, and after some algebraic manipulations omitted here, we obtain

    \vspace{-0.5cm}
    \begin{small}
    \begin{align*}
        \E[\mathcal L f_j(S)]
        &= \sum_{\epsilon\in\Xi}\lambda_\epsilon\pi_{\source\epsilon}(n) \E\brk{f_j(S^+)\ |\ \epsilon,n=\source\epsilon,n^+=\arrival\epsilon}\\
        &\ \ \ - \sum_{\epsilon\in\Xi}\lambda_\epsilon\pi_{\source\epsilon}(n)\E\brk{f_j(S)\ |\ n=\source\epsilon}.
    \end{align*}
    \end{small}
    \vspace{-0.35cm}
    
    \noindent Finally, using the definition of $f_j(S)$ and on \eqref{eq:prop:tools:DescriptorsEvolution:ode:Kolmogorov:AffineSys}, one has that $\E[f_j(S)\ |\ n=\source{\epsilon}] = F_{\source{\epsilon}}$
    if $\source{\epsilon}=j$ and $0$ otherwise, and 
    \begin{align*}
        \E[f_j(S^+)\ |\ \epsilon, n^+=\arrival{\epsilon},n=\source{\epsilon}]
        = A_\epsilon F_{\source\epsilon}+b_\epsilon
    \end{align*}
    if $\arrival\epsilon=j$ and $0$ otherwise, which yields the conclusion.
\end{proof}

\begin{proof}[Proof of Proposition~\ref{cor:tools:DescriptorsEvolution:ode:Kolmogorov:UB}]
    The proof follows the exact same development as that of Proposition~\ref{prop:tools:DescriptorsEvolution:ode:Kolmogorov}, where the inequality \eqref{eq:cor:tools:DescriptorsEvolution:ode:Kolmogorov:AffineSys:UB} is used instead of \eqref{eq:prop:tools:DescriptorsEvolution:ode:Kolmogorov:AffineSys} in the last step of the proof.
\end{proof}
\end{document}